\documentclass[12pt,reqno]{amsart}
\usepackage[top=1.2in, bottom=1.2in, left=1.2in, right=1.2in]{geometry}
\usepackage{amsfonts, amssymb, amsmath, mathtools, dsfont, xcolor}
\usepackage[linktoc=page, colorlinks, linkcolor=blue, citecolor=blue]{hyperref}
\usepackage{enumerate, commath, MnSymbol}
\usepackage{algorithm,algpseudocode}

\usepackage{graphicx}

\usepackage{subcaption} 
\usepackage[font=small]{caption} 
\captionsetup[figure]{labelfont={rm}}  

\numberwithin{equation}{section}


\DeclareMathOperator{\E}{\mathbb{E}}

\DeclareMathOperator{\Var}{Var}

\DeclareMathOperator*{\argmin}{argmin}
\DeclareMathOperator{\Lap}{Lap}
\DeclareMathOperator{\range}{range}

\renewcommand{\Pr}[2][]{\mathbb{P}_{#1} \left\{ #2 \rule{0mm}{3mm}\right\}}

\newcommand{\iip}[2]{\llangle#1,#2\rrangle}

\def \R {\mathbb{R}}

\def \AA {\mathcal{A}}

\def \FF {\mathcal{F}}

\def \LL {\mathcal{L}}

\def \g {\gamma}
\def \e {\varepsilon}
\def \d {\delta}
\def \l {\lambda}
\def \s {\sigma}

\def \one {{\textbf 1}}


\newtheorem{theorem}{Theorem}[section]

\newtheorem{corollary}[theorem]{Corollary}
\newtheorem{lemma}[theorem]{Lemma}

\newtheorem{definition}[theorem]{Definition}

\theoremstyle{remark}

\begin{document}

\title{Privacy of synthetic data: a statistical framework}
\author{March Boedihardjo}
\address{Department of Mathematics, University of California Irvine}
\email{marchb@math.uci.edu}
\author{Thomas Strohmer}
\address{Center of Data Science and Artificial Intelligence Research, University of California, Davis \\ \& Department of Mathematics, University of California Davis}
\email{strohmer@math.ucdavis.edu}
\author{Roman Vershynin}
\address{Department of Mathematics, University of California Irvine}
\email{rvershyn@uci.edu}

\maketitle

\begin{abstract}
Privacy-preserving data analysis is emerging as a challenging problem with far-reaching impact. In particular, synthetic data are a promising concept toward solving the aporetic conflict between data privacy and data sharing. Yet, it is known that accurately generating private,  synthetic data of certain kinds is NP-hard.  We develop a statistical framework for differentially private synthetic data, which enables us to circumvent the computational hardness of the problem. We consider the true data as a random sample drawn from a population $\Omega$ according to some
unknown density. We then replace $\Omega$ by a much smaller random subset $\Omega^*$, which we sample according to some known density. We generate synthetic data
on the reduced space $\Omega^*$ by fitting the specified linear statistics obtained from the true data.  To ensure privacy we use the common Laplacian mechanism.  Employing the concept of  R\`enyi condition number, which measures how well the sampling distribution is correlated with the population distribution, we derive explicit bounds on the privacy and accuracy provided by the proposed method.

\end{abstract}


\section{Introduction}

Data science and artificial intelligence play a key role in successfully tackling many of the grand challenges our
society is facing over the coming years.  Data sharing and data democratization will feature prominently in these endeavors.
At the same time, data colonialism~\cite{couldry2019data} and surveillance capitalism~\cite{zuboff2019} emerge as increasingly concerning developments that threaten the potential benefits of data-driven advancements and that highlight the utmost importance of 
data rights and privacy. For instance, the WHO emphasized in its recent report the importance of data management methods that improve the utility and accuracy of health-care data, while not compromising privacy~\cite{WHO2021}.
 However, data democratization and responsible data sharing are not likely to be accommodated by more efficient deidentification or strict security/privacy processes alone.

{\em Synthetic data} is a promising ingredient toward solving the aporetic conflict between data privacy and data sharing. The goal of synthetic data is  to create an as-realistic-as-possible data set, one that not only maintains the nuances of the original data, but does so without risk of exposing sensitive information.
The problem of making private and accurate synthetic data is NP-hard in the worst case~\cite{ullman2011pcps,ullman2016answering}.

In this paper we take a different route. We will show that the problem of making private and accurate synthetic data
is tractable in the statistical framework, where the true data is seen as a random sample drawn from some probability space.
Our method comes with guarantees of privacy, accuracy, and computational efficiency.
We will discuss how our method improves upon existing techniques in Section~\ref{stateoftheart}.


\section{Problem setup and main results}

\subsection{The problem}
We model the true data $X$ as a sequence of $n$ elements  from some ground set $\Omega$. E.g., for an electronic health record these elements might represent patients.
For example, $\Omega = \{0,1\}^p$ allows each patient to have $p$ binary parameters, while 
$\Omega = \R^p$ allows the parameters to be real. Multimodal data are possible, too: some parameters may be categorical, some real, some may consist of text strings, etc. 
We would like to manufacture a synthetic dataset $Y$, which is another sequence of $k$ elements from $\Omega$. We want the synthetic data to be  private and accurate.

\subsection{Defining accuracy}
By ``accuracy'' we mean the accuracy of linear statistics of the data. 
Consider a finite class $\FF$ of {\em test functions}, which are functions from $\Omega$ to $[-1,1]$. Linear statistics of the data $X = (x_1,\ldots,x_n)$ are the sums of the form
$\frac{1}{n} \sum_{i=1}^n f(x_i)$ for $f \in \FF$.
We would like the synthetic data $Y$ to approximately preserve all these sums, up to a given additive error $\d$:
\begin{equation}	\label{eq: delta-accuracy}
\max_{f \in \FF} 
\abs{\frac{1}{k} \sum_{i=1}^k f(y_i) - \frac{1}{n} \sum_{i=1}^n f(x_i)} \le \d.
\end{equation}
In this case we say that the synthetic dataset is {\em $\d$-accurate}.

As an important example, linear statistics are capable of encoding {\em marginals} of high-dimensional data. Indeed, let us consider Boolean data where $\Omega = \{0,1\}^p$. In the context of electronic health records, the data $X = (x_1,\ldots,x_n)$ consists of records of $n$ patients each having $p$ binary parameters. The fraction of the number of patients whose first and second parameters equal $1$ and third parameter equals $0$ is a three-dimensional marginal. It can be expressed as the linear statistic
$\frac{1}{n} \sum_{i=1}^n f(x_i)$, where $f : \{0,1\}^p \to \{0,1\}$ is the indicator function 
$f(x) = \one_{\{x(1)=x(2)=1, \, x(3)=0\}}$. One-dimensional marginals capture the means of the parameters, jointly with two-dimensional marginals they determine the correlations, 
and higher dimensional marginals capture higher-order dependencies.

In many situations, $\abs{\Omega}$ is too large for computations while $\abs{\FF}$ is reasonable.
For example, if $\FF$ encodes all $d$-dimensional marginals of $p$-dimensional Boolean data as in the previous example, $\abs{\Omega} = 2^p$ is exponential in $p$, while 
$$
\abs{\FF} = \binom{p}{\le d} = \binom{p}{0}+\binom{p}{1}+\cdots + \binom{p}{d} 
\le \Big(\frac{ep}{d}\Big)^d
$$
is polynomial in $p$ for any fixed $d$. 

\subsection{A statistical framework}

Ullman and Vadhan~\cite{ullman2011pcps} showed (under standard cryptographic assumptions) that in general it is NP-hard to make private synthetic Boolean data which approximately preserve all two-dimensional marginals. While this result may seem discouraging, it is a worst-case result. 

Yet the {\em worst} kind of data, for which the problem is hard, are rarely seen in practice. More common in applications is the statistical framework, where the true data is seen as a {\em random} sample drawn from some probability space $(\Omega, \Sigma, \nu)$. 
The probability distribution $\nu$ specifies the population model of the true data.
We assume that we neither know $\nu$, 
nor can we sample according to $\nu$ thereby generating more true data.

Suppose, however, that we can sample from $\Omega$ according to some other, known, probability measure $\mu$. For example, while we may not know the underlying population distribution $\nu$ of the patients in the Boolean cube $\Omega = \{0,1\}^p$, we can still sample from the cube according to the uniform measure $\mu$ by choosing all coordinates at random and independently. Similarly, while we may not know the population distribution $\nu$ of written notes in patient health records, there do exist generative models $\mu$ that generate texts. In order to uphold privacy, we assume that the true data $X$ may not be used to build the generative model $\mu$, but it can be built using some other public data. 

Having put our problem into a statistical framework, we can try to circumvent the computational hardness of our problem in the most obvious way: {\em subsample} $\Omega$. Namely, we replace $\Omega$ by a much smaller random subset $\Omega^*$ that is sampled according to the distribution $\mu$. Then we generate synthetic data in $\Omega^*$ by fitting the desired linear statistics (e.g.\ all marginals up to a specified degree) of the true data as close as possible. 

This idea may only work if the sampling distribution $\mu$ has some ``correlation'' with the population distribution $\nu$. We can quantify this correlation using the notion of {\em R\`enyi divergence}~\cite{renyi1961measures}. Namely, if $\nu$ is absolutely continuous with respect to $\mu$, we can utilize the Radon-Nikodym derivative $d\nu/d\mu$ to define the {\em R\`enyi condition number}
\begin{equation}\label{kappa}
\kappa(\nu\|\mu) 
= \int \Big( \frac{d\nu}{d\mu} \Big)^2 d\mu
= \int \frac{d\nu}{d\mu} \, d\nu,
\end{equation}
a quantity that equals the exponential of $D_2(\nu\|\mu)$, 
the {\em R\`enyi divergence} of order $2$.

Conceptually, $\kappa(\nu\|\mu)$ is similar to the notion of the condition number in numerical linear algebra: the smaller, the better. The best value of the R\`enyi condition number is $1$, achieved when $\nu=\mu$.

If $\Omega$ is finite, the Radon-Nikodym derivative
$d\nu/d\mu$ equals the ratio of the densities $\phi(x)=\nu(\{x\})$ and $\psi(x)=\mu(\{x\})$.
In particular, if the sampling distribution $\mu$ is uniform, $\psi(x) = 1/\abs{\Omega}$ for all $x$, 
and we have
\begin{equation}	\label{eq: condition number uniform}
\kappa(\nu\|\mu) 
= \int \phi(x)^2 \abs{\Omega}^2 \, d\mu(x)
= \Bigg( \frac{\norm{\phi}_{L^2(\mu)}}{\norm{\phi}_{L^1(\mu)}} \Bigg)^2.
\end{equation}
Thus, the R\`enyi condition number in this case measures the regularity of the population density $\phi$: 
the more spread out it is, the smaller its R\`enyi condition number.

\subsection{Our approach}

Our method, in a nutshell, is the following:
obtain a reduced space $\Omega^*$ by subsampling $\Omega$ according to the known probability measure $\mu$, and generate synthetic data $Y$ on $\Omega^*$ by fitting the linear statistics obtained from $X$. 

Our results come with guarantees of privacy, accuracy, and efficiency. To achieve all this, we assume (roughly speaking) that the size of the true data is at least nearly linear in the number of statistics we seek to preserve: 
$$
\abs{X} \gtrsim \abs{\FF} \log \abs{\FF}.
$$
For accuracy, we need the size of the synthetic data to be at least logarithmic in the number of statistics (a mild assumption):
$$
\abs{Y} \gtrsim \log \abs{\FF}.
$$
And, finally, we can make all computations in the reduced space $\Omega^*$ as long as its size is at least linear in the number of statistics:
$$
\abs{\Omega^*} \gtrsim \abs{\FF}.
$$
If these three conditions are met, we can generate synthetic data while preserving privacy, accuracy, and efficiency (for the latter, we solve a linear program in dimension  $\abs{\Omega^*}$).

In order to provide rigorous privacy guarantees, we will employ the concept of {\em differential privacy}~\cite{dwork2014algorithmic}, which has emerged as a de-facto standard for private data sharing. 
\begin{definition}[Differential Privacy~\cite{dwork2014algorithmic}]  A randomized function ${\mathcal M}$ gives $\epsilon$-differential privacy
 if for all databases $D_1$ and $D_2$  differing on at most one element, and all measurable $S \subseteq \range({\mathcal M})$, 
 $${\mathbb P} [{\mathcal M}(D_1) \in S] \le e^{\epsilon} \cdot {\mathbb P} [{\mathcal M}(D_2) \in S],$$
where the probability is with respect to the randomness of  ${\mathcal M}$.
\end{definition}

A basic technique to achieve differential privacy is the {\em Laplacian mechanism}, which consists of adding Laplacian noise to the data. A Laplacian random variable $\l$ is Laplacian with parameter $\s$, abbreviated $\l \sim \Lap(\s)$, if $\l$ is a symmetric random variable with exponential tails in both directions:
$$
\Pr{\abs{\l} > t} = \exp(-t/\s), \quad t \ge 0.
$$
It is well known and not hard to see that Laplacian mechanism achieves differential privacy;
see Lemma~\ref{lem: Laplacian mechanism} for details.

\subsection{Algorithm}

We present a high level algorithmic description of our proposed method in Algorithm~\ref{mainalgorithm} below. See Section~\ref{ss:privacy} for the role of the parameters arising in the algorithm.

\begin{algorithm}[h!]
\caption{Private synthetic data algorithm}
\label{mainalgorithm}
\begin{algorithmic}

\State {\bf Input:}
(a) the true data: a sequence $X=(x_1,\ldots,x_n)$ of $n$ elements of $\Omega$;\\
\hspace*{13mm} (b) a family $\FF$ of test functions from $\Omega$ to $[-1,1]$;\\
\hspace*{13mm} (c) the reduced space $\Omega^* = \{z_1,\ldots,z_m\}$, made of
points $z_i$ chosen from $\Omega$;
\hspace*{13mm} (d) parameter $\s > 0$.

\begin{enumerate}
\item[\bf {1.}]
{\bf Add noise:} For each test function $f \in \FF$, generate an independent Laplacian random variable
$\l(f) \sim \Lap(\s)$.

\item[\bf {2.}]
{\bf Reweight:} Compute a density $h^*$ on $\Omega^*$ whose linear statistics are uniformly as close as possible to the linear statistics of the true data perturbed by  Laplacian noise:
$$
h^* = \argmin \left\{ \max_{f \in \FF} \abs{\sum_{i=1}^m f(z_i) h(z_i) - \frac{1}{n} \sum_{i=1}^n f(x_i) - \l(f)}: \;
h \text{ is a density on } \Omega^* \right\}.
$$

\item[\bf{3.}]
{\bf Bootstrap:} Create a sequence $Y = (y_1,\ldots,y_k)$ of $k$ elements 
drawn from $\Omega^*$ independently with density $h^*$.

\end{enumerate}

\State {\bf Output:} synthetic data $Y = (y_1,\ldots,y_k)$.

\end{algorithmic}

\end{algorithm}

\bigskip

Note that computing $h^*$ amounts to solving a linear program with $\abs{\Omega^*} \le m$ variables\footnote{We have inequality here because the set $\Omega^*$ is formed of points $z_i$ that are sampled independently, which may result in repetitions.} and at most 
$\abs{\FF} +m+1$ constraints. The complexity of solving general linear programs is polynomial in the number of variables, see e.g.~\cite{megiddo2012progress}.

\subsection{Privacy and accuracy guarantees} \label{ss:privacy}

\begin{theorem}[Privacy]		\label{thm: privacy}
Let $\d >0, \g >0$ and set $\s = \d / \log(\abs{\FF}/\g)$. If $$n \ge 2(\e\d)^{-1} \abs{\FF} \log(\abs{\FF}/\g),$$  then Algorithm~\ref{mainalgorithm} is $\e$-differentially private.
\end{theorem}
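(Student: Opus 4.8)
The plan is to reduce the statement to the standard privacy guarantee of the Laplacian mechanism together with the post-processing invariance of differential privacy.

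First I would isolate the only place where the true data $X$ enters the algorithm. After Step~1, the data affects the output $Y$ solely through the perturbed statistics vector
$$
a=\big(a(f)\big)_{f\in\FF},\qquad a(f)=\frac1n\sum_{i=1}^{n}f(x_i)+\l(f).
$$
Indeed, Step~2 produces $h^{*}$ as a function of $a$ and of the reduced space $\Omega^{*}$ (an input that is drawn independently of $X$), and Step~3 samples $Y$ from $h^{*}$ using fresh randomness unrelated to $X$. Thus $a\mapsto Y$ is a randomized map that does not otherwise depend on $X$, i.e.\ it is post-processing, and since differential privacy is closed under post-processing it suffices to show that the map $X\mapsto a$ is $\e$-differentially private.

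Second, I would view $X\mapsto a$ as the Laplacian mechanism applied to the query $q(X)=\big(\tfrac1n\sum_{i=1}^{n}f(x_i)\big)_{f\in\FF}\in\R^{\FF}$, with independent $\Lap(\s)$ noise on each of the $\abs{\FF}$ coordinates. The one genuine computation is the $\ell^{1}$ sensitivity of $q$: if $X$ and $X'$ differ in a single record, say $x_{j}\neq x_{j}'$ and $x_{i}=x_{i}'$ otherwise, then for every $f\in\FF$ we have $\abs{q(X)_{f}-q(X')_{f}}=\tfrac1n\abs{f(x_{j})-f(x_{j}')}\le\tfrac2n$ because $\FF$ consists of functions into $[-1,1]$, and summing over $f$ gives $\norm{q(X)-q(X')}_{1}\le 2\abs{\FF}/n$.

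Finally I would invoke Lemma~\ref{lem: Laplacian mechanism}, according to which adding independent $\Lap(\s)$ noise to each coordinate of a query of $\ell^{1}$ sensitivity $\Delta$ gives $(\Delta/\s)$-differential privacy, and check that with $\s=\d/\log(\abs{\FF}/\g)$,
$$
\frac{\Delta}{\s}\le\frac{2\abs{\FF}/n}{\s}=\frac{2\abs{\FF}\log(\abs{\FF}/\g)}{n\,\d}\le\e,
$$
the last inequality being exactly the hypothesis $n\ge 2(\e\d)^{-1}\abs{\FF}\log(\abs{\FF}/\g)$. Combined with the post-processing reduction above, this shows Algorithm~\ref{mainalgorithm} is $\e$-differentially private. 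I do not expect a real obstacle here: the argument is essentially bookkeeping, and the only things requiring care are (i) making sure that Steps~2--3 introduce no further dependence on $X$, so that post-processing genuinely applies, and (ii) keeping track of why the sample-size condition carries a factor $\abs{\FF}$ (from the $\abs{\FF}$-dimensional query) times $\log(\abs{\FF}/\g)$ (which appears only because $\s$ was chosen with the accuracy bound, not privacy, in mind).
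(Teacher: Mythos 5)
Your proposal is correct and follows essentially the same route as the paper: bound the $\ell^1$ sensitivity of the vector of linear statistics by $2\abs{\FF}/n$, apply Lemma~\ref{lem: Laplacian mechanism}, and conclude by post-processing since $h^*$ and $Y$ depend on $X$ only through the perturbed statistics. The only cosmetic difference is that the paper computes the sensitivity for neighboring datasets of sizes $n$ and $n+1$ (add one record), whereas you use the replace-one-record convention; both yield the same bound and the same condition $\Delta/\s \le \e$.
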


We emphasize that this privacy guarantee holds for {\em any} choice of the reduced space $\Omega^*$. 

\begin{theorem}[Accuracy]		\label{thm: accuracy}
  Let $\min(n,k) \ge \d^{-2} \log(\abs{\FF}/\g)$ and $m \ge \d^{-2} K \abs{\FF}/\g$, where  $\d \in (0,1/2]$ and $\g \in (0,1/4)$. Set 
  $\s = \d / \log(\abs{\FF}/\g)$.
  Suppose the true data $X = (x_1,\ldots,x_n)$ is sampled from $\Omega$ 
  independently and according to some probability measure $\nu$, 
  and the reduced space $\Omega^* = \{z_1,\ldots,z_m\}$ is sampled
  from $\Omega$ independently and according to some probability measure $\mu$.
  Assume that the R\`enyi condition number satisfies $\kappa(\nu\|\mu) \le K$.
  Also assume that the family $\FF$ contains the function that is identically equal to $1$.
  Then with probability at least $1-4\g$ 
  the synthetic data $Y = (y_1,\ldots,y_k)$ generated by  Algorithm~\ref{mainalgorithm} is $(8\d)$-accurate.
 \end{theorem}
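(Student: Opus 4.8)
The plan is to bound the left-hand side of the accuracy requirement~\eqref{eq: delta-accuracy}, uniformly over $\FF$, by splitting it along the three steps of Algorithm~\ref{mainalgorithm}. For a fixed $f\in\FF$ the triangle inequality gives
\begin{align*}
\abs{\frac1k\sum_{i=1}^{k} f(y_i)-\frac1n\sum_{i=1}^{n} f(x_i)}
&\le \abs{\frac1k\sum_{i=1}^{k} f(y_i)-\sum_{j=1}^{m} f(z_j)h^*(z_j)}\\
&\quad+\abs{\sum_{j=1}^{m} f(z_j)h^*(z_j)-\frac1n\sum_{i=1}^{n} f(x_i)-\l(f)}+\abs{\l(f)},
\end{align*}
and the aim is to show that on an event of probability at least $1-4\g$ the three summands (the \emph{bootstrap}, \emph{reweighting} and \emph{noise} terms) are at most $\d$, $6\d$ and $\d$, respectively, for all $f\in\FF$ at once.

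Three of the four ingredients are routine concentration bounds, each costing one failure event of probability at most $\g$. For the noise term, $\l(f)\sim\Lap(\s)$ with $\s=\d/\log(\abs{\FF}/\g)$ gives $\P\{\abs{\l(f)}>\d\}=\g/\abs{\FF}$, so a union bound over $\FF$ yields $\max_{f\in\FF}\abs{\l(f)}\le\d$. For the bootstrap term, conditionally on $X$, $\Omega^*$ and the noise the points $y_1,\dots,y_k$ are i.i.d.\ with density $h^*$ on $\Omega^*$; since $\abs{f}\le1$, Hoeffding's inequality and a union bound over $\FF$ (this is where $k\ge\d^{-2}\log(\abs{\FF}/\g)$ is used) keep the bootstrap term below $\d$. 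Applying the same reasoning to the i.i.d.\ sample $X\sim\nu$, using $n\ge\d^{-2}\log(\abs{\FF}/\g)$, shows that $\abs{\frac1n\sum_i f(x_i)-\int f\,d\nu}\le\d$ for all $f\in\FF$; call this the \emph{true-data event}.

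The reweighting term is the crux. Because $h^*$ minimizes the reweighting objective by construction, it suffices to exhibit a \emph{single} feasible density $h$ on $\Omega^*$ for which $\max_{f\in\FF}\abs{\sum_j f(z_j)h(z_j)-\frac1n\sum_i f(x_i)-\l(f)}$ is small. The natural candidate is the importance-sampling (Radon--Nikodym) reweighting of $\Omega^*$: since $\kappa(\nu\|\mu)\le K<\infty$ forces $\nu\ll\mu$, set $w_j=\frac1m\,\frac{d\nu}{d\mu}(z_j)$; then, as the $z_j$ are i.i.d.\ $\sim\mu$, for each fixed $f$ the sum $\sum_{j=1}^m w_j f(z_j)$ is an unbiased estimator of $\int f\,d\nu$ with
\[
\Var\Big(\sum_{j=1}^{m} w_j f(z_j)\Big)
\;\le\;\frac1m\int\Big(\frac{d\nu}{d\mu}\Big)^2 d\mu
\;=\;\frac{\kappa(\nu\|\mu)}{m}
\;\le\;\frac Km ,
\]
by the defining identity~\eqref{kappa} and $\abs{f}\le1$. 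Chebyshev's inequality together with a union bound over $\FF$, using \emph{exactly} the hypothesis $m\ge\d^{-2}K\abs{\FF}/\g$, then give $\abs{\sum_j w_j f(z_j)-\int f\,d\nu}\le\d$ for all $f\in\FF$, with failure probability at most $\g$. Taking here $f\equiv1\in\FF$ shows that $W:=\sum_j w_j\in[1-\d,1+\d]$, so $h(z_j):=w_j/W$ is an honest probability density on $\Omega^*$; renormalizing costs a factor at most $(1+\d)/(1-\d)\le 3$ since $\d\le\tfrac12$, whence $\abs{\sum_j f(z_j)h(z_j)-\int f\,d\nu}\le 4\d$ for all $f\in\FF$. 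On the true-data and noise events this makes the reweighting objective at $h$ at most $4\d+\d+\d=6\d$, which bounds the reweighting term.

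Intersecting the four events (total failure probability at most $4\g$), the three summands add to at most $\d+6\d+\d=8\d$ for every $f\in\FF$, i.e.\ the synthetic data is $(8\d)$-accurate. \textbf{The main obstacle is the reweighting step.} The perturbed statistics of $X$ cannot be reproduced \emph{exactly} on the small random set $\Omega^*$ --- the true points $x_i$ need not even belong to $\Omega^*$ --- so the approximation must be routed through the unknown population $\nu$, matched from one side by the empirical measure of $X$ and from the other by the importance-reweighted empirical measure of $\Omega^*$. This detour is exactly what brings the R\`enyi condition number $\kappa(\nu\|\mu)$ into play: it is the second moment of $d\nu/d\mu$, hence the variance proxy of the importance-sampling estimator, and it governs how large the reduced space must be, namely $m\gtrsim K\abs{\FF}/\g$.
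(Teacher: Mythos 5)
Your proposal is correct and follows essentially the same route as the paper: the same four concentration events (Bernstein/Hoeffding for the true data and the bootstrap, Chebyshev with variance bounded by $\kappa(\nu\|\mu)$ for the importance-reweighted points of $\Omega^*$, and a union bound for the Laplacian noise), the same feasible candidate density (the normalized Radon--Nikodym reweighting, using $\one\in\FF$ to control the normalization $W=r$), and the same use of the minimality of $h^*$, yielding the identical $8\d$ and $1-4\g$ bookkeeping. The only difference is a cosmetic rearrangement of the triangle inequality (you keep the $\abs{\l(f)}$ term explicit rather than adding and removing it as the paper does), which changes nothing of substance.
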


Let us specialize our results to Boolean data. 
Here the sample space is $\Omega = \{0,1\}^p$ 
and we seek accuracy with respect to all $\abs{\FF} = \binom{p}{\le d}$ marginals up to degree $d$. Choose $\mu$ to be the uniform density on the cube, recall \eqref{eq: condition number uniform}, and combine the two theorems above to get:

\begin{corollary}[Boolean data]		\label{cor: Boolean data}
  Let $n \gg \binom{p}{\le d} \log \binom{p}{\le d}$ 
  and $k \gg \log \binom{p}{\le d}$.
  Suppose that the true data $X = (x_1,\ldots,x_n)$ is sampled from $\{0,1\}^p$ 
  independently and according to some (unknown) density $\phi$.
  Then one can generate synthetic data $Y= (y_1,\ldots,y_k)$ that is
  $o(1)$-accurate with respect to all marginals 
  of dimension at most $d$ with probability $1-o(1)$, and is also $o(1)$-differentially private.
  The algorithm that generates $Y$ from $X$ runs in time polynomial in 
  $n$, $k$, and $\kappa$ for a fixed $d$.  
  \end{corollary}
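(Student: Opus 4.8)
The plan is to read this corollary off from Theorems~\ref{thm: privacy} and~\ref{thm: accuracy} by instantiating them in the Boolean setting with the free parameters $\d,\g,\e$ and the reduced-space size $m$ chosen as suitable functions of $p$. First I would fix the instance: $\Omega=\{0,1\}^p$, $\FF$ the family of all marginals of dimension at most $d$, so $\abs{\FF}=\binom{p}{\le d}$, and $\mu$ the uniform probability measure on the cube. The degree-$0$ marginal is the function identically equal to $1$, so $\FF$ contains $\one$, as Theorem~\ref{thm: accuracy} requires; and writing $\phi$ for the unknown population density, \eqref{eq: condition number uniform} identifies the R\`enyi condition number as $\kappa=(\norm{\phi}_{L^2(\mu)}/\norm{\phi}_{L^1(\mu)})^2$, so we may take $K=\kappa$ in Theorem~\ref{thm: accuracy}.

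Next I would choose the parameters. Put $L:=\log\binom{p}{\le d}$ and
\[
N_p:=\frac{n}{\abs{\FF}\,L},\qquad M_p:=\frac{\min(n,k)}{L};
\]
the hypotheses $n\gg\binom{p}{\le d}\log\binom{p}{\le d}$ and $k\gg\log\binom{p}{\le d}$ say exactly that $N_p\to\infty$ and $M_p\to\infty$ as $p\to\infty$. Take $\g=\g_p\to 0$ slowly but with $\g_p\ge\abs{\FF}^{-1}$ (so that $\log(\abs{\FF}/\g)\le 2L$), then set
\[
\d=\e:=\big(\tfrac14\min(N_p,M_p)\big)^{-1/2}\to 0,\qquad \s:=\d/\log(\abs{\FF}/\g),\qquad m:=\big\lceil \d^{-2}\kappa\abs{\FF}/\g\big\rceil.
\]
Using only $\d^{-2}\le\tfrac14 N_p$, $\d^{-2}\le\tfrac14 M_p$ and $\log(\abs{\FF}/\g)\le 2L$, a direct computation verifies, for all large $p$, the three numerical conditions $n\ge 2(\e\d)^{-1}\abs{\FF}\log(\abs{\FF}/\g)$, $\min(n,k)\ge\d^{-2}\log(\abs{\FF}/\g)$ and $m\ge\d^{-2}K\abs{\FF}/\g$, together with $\d\in(0,1/2]$ and $\g\in(0,1/4)$.

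With this in hand the conclusions follow. Theorem~\ref{thm: privacy} gives that Algorithm~\ref{mainalgorithm} run with this $\s$ is $\e_p$-differentially private, and $\e_p=o(1)$. Theorem~\ref{thm: accuracy} gives that, with probability at least $1-4\g_p=1-o(1)$, the output $Y$ is $(8\d_p)$-accurate with respect to all marginals of dimension at most $d$, and $8\d_p=o(1)$. For the running time, Step~2 of Algorithm~\ref{mainalgorithm} is, as noted after the algorithm, a linear program in at most $m$ variables with at most $\abs{\FF}+m+1$ constraints, solvable in time polynomial in $m$, while Steps~1 and~3 cost $\mathrm{poly}(m,k,\abs{\FF})$. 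Since for fixed $d$ we have $\abs{\FF}=\binom{p}{\le d}=o(n)$, $\d_p^{-2}\le n$ and $\g_p^{-1}\le\abs{\FF}\le n$, we get $m\le\mathrm{poly}(n,k,\kappa)$, so the whole algorithm runs in time polynomial in $n$, $k$ and $\kappa$.

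The only content beyond Theorems~\ref{thm: privacy} and~\ref{thm: accuracy} is the bookkeeping in the second paragraph: one must pick $\d_p,\g_p,\e_p$ (and $m_p$) so that the three numerical conditions hold simultaneously for large $p$ while keeping $\d_p,\g_p,\e_p=o(1)$ and $m_p$ polynomial in $n,k,\kappa$. This is a routine ``diagonal'' argument made possible by $N_p,M_p\to\infty$, and the explicit choice above is one way to carry it out; I do not expect any genuine obstacle here, as the corollary is essentially a repackaging of the two main theorems.
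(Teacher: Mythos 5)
Your proposal is correct and follows exactly the paper's intended route: the corollary is obtained by instantiating Theorems~\ref{thm: privacy} and~\ref{thm: accuracy} with $\Omega=\{0,1\}^p$, $\abs{\FF}=\binom{p}{\le d}$, uniform $\mu$ (so $K=\kappa$ via \eqref{eq: condition number uniform}), suitable $\d_p,\g_p,\e_p\to 0$ and $m\asymp\d^{-2}\kappa\abs{\FF}/\g$, with the running time coming from the linear program in the reduced space. Your explicit parameter bookkeeping checks out and simply makes precise what the paper leaves as a routine combination of the two theorems.
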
 

The proofs of the claims above will be given in Section~\ref{s:proofs}.

\subsection{Related work}\label{stateoftheart}

There exists a fairly large body of work on privately releasing answers in the interactive and non-interactive query setting, a detailed review of which is beyond the scope of this paper.
A major advantage of releasing a  synthetic data set instead of just the answers to specific queries is that synthetic data opens up  a much richer toolbox (clustering, classification, regression, visualization, etc.), and thus much more flexibility, to analyze the data.

In~\cite{blum2013learning}, Blum, Ligett, and Roth gave an $\e$-differentially private synthetic data algorithm whose accuracy scales  logarithmically with the number of queries, but the complexity scales exponentially with $p$. This computational inefficiency comes as no surprise, if we recall that making differentially private Boolean synthetic data which preserves all of the
two-dimensional marginals with accuracy $o(1)$ is NP-hard~\cite{ullman2011pcps}.

The papers~\cite{hardt2010multiplicative,hardt2012simple} propose methods for producing private synthetic data with an error bound of about  $\tilde{\mathcal{O}}(\sqrt{n} p^{1/4})$ per query. However, the associated  algorithms have running time
that is at least exponential in $p$.

In~\cite{barak2007privacy}, Barak et al.\ derive a method for producing accurate and private synthetic Boolean data based on linear programming. The method in~\cite{barak2007privacy} is conceptually similar to ours even though it is concerned with marginals, while our approach holds for general linear statistics. The key difference is in the computational complexity. The method in~\cite{barak2007privacy} involves solving a linear program on the entire domain $\Omega = \{0,1\}^p$ and thus its running time is exponential in $p$. The authors of~\cite{barak2007privacy} emphasize that {\em ``one of the main algorithmic questions left open from this work is that of efficiency''}, for which our paper provides a solution.
Our method works in the reduced space $\Omega^*$, which, according to Theorem~\ref{thm: accuracy}, has size $m$ slightly larger than $\binom{p}{\le d}$, and thus it is only polynomial in $p$, thereby providing a positive answer to the aforementioned algorithmic question.

The method developed by Hardt and Talwar in \cite{hardttalwar} privately releases answers to linear queries (including, in particular, marginals). It applies to general data that needs not be Boolean, just like in our work. However, unlike our method, the method in \cite{hardttalwar} does not construct synthetic data. Also, unlike our work, the theoretical accuracy bounds in \cite{hardttalwar} hold for most but not all linear queries.
Nikolov, Talwar, and Zhang in~\cite{nikolov2013geometry}, follow up on the work~\cite{hardttalwar} and improve the (lower and upper) bounds derived by Hardt and Talwar.  The lack of efficiency of the method in~\cite{nikolov2013geometry} is addressed in~\cite{aydore2021differentially}, where the authors demonstrate empirically the computational efficiency of their method.

The paper~\cite{dwork2015efficient} by Dwork, Nikolov, and Talwar is concerned with a convex relaxation based approach for private marginal release, and thus, unlike our method,  does not construct synthetic data for a ground set  $\Omega$.  Also, \cite{dwork2015efficient} gives ``only'' $(\epsilon,\delta)$-differential privacy.

Privacy-preserving data analysis (beyond marginals) in a statistical framework is the focus of~\cite{duchi2018minimax,cai2019cost}. While these papers are quite intriguing, they are not concerned with synthetic data, and thus not directly related to this work.

Another method of constructing private synthetic data was proposed recently in  \cite{BSV2021a}. To compare the two, recall that the no-go result of Ullhman says (roughly) that, for the {\em worst true data}, it is impossible to efficiently construct private synthetic Boolean data that approximately preserves {\em all marginals} of dimension $2$. The work \cite{BSV2021a} and the present paper overcome this impossibility result, each in its own way: this paper relaxes ``worst data'' to ``typical data'', while \cite{BSV2021a} relaxes ``all marginals'' to ``most marginals''.

\section{Proofs}\label{s:proofs}

For an integrable function $f: \Omega \to \R$ on a measure space $(\Omega, \Sigma, \nu)$, 
we denote
\begin{equation}	\label{eq: bilinear form}
\iip{f}{\nu} = \int f\, d\nu.
\end{equation}
Given a sequence of points $x_1,\ldots,x_n \in \Omega$, possibly with repetitions, 
we consider the empirical measure 
$$
\nu_n = \frac{1}{n} \sum_{i=1}^n \d_{x_i}.
$$
By definition, we have
\begin{equation}	\label{eq: bilinear empirical}
\iip{f}{\nu_n} = \frac{1}{n} \sum_{i=1}^n f(x_i).
\end{equation}
With this notation, the optimization part of Algorithm~\ref{mainalgorithm} can be expressed as follows:
\begin{equation}	\label{eq: algorithm equivalent}
h^* = \argmin \left\{ \max_{f \in \FF} \abs{\iip{f}{h} - \iip{f}{\nu_n}- \l(f)}: \;
h \text{ is a probability measure on } \Omega^*
\right\}.
\end{equation}

\subsection{Privacy}

The following lemma is well known, see e.g.\ Theorem~2 in~\cite{barak2007privacy}.

\begin{lemma}[Laplacian mechanism]			\label{lem: Laplacian mechanism}
  Let $\AA$ be a mapping that transforms data $D$ to a point $\AA(D) \in \R^N$.
  Let   
  $$
  \Delta = \max_{D_1, D_2} \norm{\AA(D_1)-\AA(D_2)}_1
  $$
  where the maximum is over all pairs of input data $D_1$ and $D_2$ 
  that differ in a single element. 
  Then the addition of i.i.d.\ Laplacian noise $\l_i \sim \Lap(\s)$ to each coordinate
  of $\AA(D)$ preserves $(\Delta/\s)$-differential privacy.
\end{lemma}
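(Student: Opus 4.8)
The plan is to prove the Laplacian mechanism lemma by a direct computation of the likelihood ratio of the output distributions on two neighboring databases. First I would fix neighboring data $D_1, D_2$ differing in one element, write $a = \AA(D_1) \in \R^N$ and $b = \AA(D_2) \in \R^N$, and note that the randomized mechanism outputs $\AA(D) + (\l_1,\ldots,\l_N)$ where the $\l_i$ are i.i.d.\ $\Lap(\s)$. I would record the density of a single $\Lap(\s)$ variable: since $\Pr{\abs\l > t} = \exp(-t/\s)$, the density is $p(t) = \frac{1}{2\s}\exp(-\abs t/\s)$. Hence the output of $\MM(D)$ has density $x \mapsto \prod_{i=1}^N \frac{1}{2\s}\exp(-\abs{x_i - \AA(D)_i}/\s)$ on $\R^N$.

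Next I would bound the ratio of these two densities pointwise. For any $x \in \R^N$,
\begin{equation*}
\frac{\prod_{i=1}^N \exp(-\abs{x_i-a_i}/\s)}{\prod_{i=1}^N \exp(-\abs{x_i-b_i}/\s)}
= \exp\!\Big( \frac{1}{\s}\sum_{i=1}^N \big(\abs{x_i-b_i} - \abs{x_i-a_i}\big)\Big)
\le \exp\!\Big( \frac{1}{\s}\sum_{i=1}^N \abs{a_i-b_i}\Big)
= \exp\!\Big(\frac{\norm{a-b}_1}{\s}\Big),
\end{equation*}
where the inequality is the reverse triangle inequality $\abs{\abs{x_i-b_i}-\abs{x_i-a_i}} \le \abs{a_i-b_i}$ applied coordinatewise. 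By the definition of $\Delta$, the right-hand side is at most $e^{\Delta/\s}$, uniformly in $x$.

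Finally I would integrate this pointwise density bound over an arbitrary measurable set $S \subseteq \R^N$ to conclude $\Pr{\MM(D_1) \in S} \le e^{\Delta/\s}\,\Pr{\MM(D_2) \in S}$, which is exactly $(\Delta/\s)$-differential privacy. There is no real obstacle here: the only points requiring a little care are getting the normalizing constant of the Laplace density right (it cancels anyway) and invoking the reverse triangle inequality in the correct direction. This is why the paper calls the lemma well known and merely cites~\cite{barak2007privacy}.
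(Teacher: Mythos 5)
Your proof is correct and complete: the Laplace density, the pointwise likelihood-ratio bound via the reverse triangle inequality, and the integration over a measurable set $S$ are exactly the standard argument. The paper itself offers no proof of this lemma, merely citing Theorem~2 of~\cite{barak2007privacy}, and your argument is essentially the one found there, so there is nothing to reconcile.
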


Consider the linear map $\LL$ that associates to a measure $\nu$ on $\Omega$
the set of its linear statistics, namely
$$
\LL(\nu) = \left( \iip{f}{\nu} \right)_{f \in \FF} \in \R^{\abs{\FF}}.
$$
Consider two input sets $(x_1,\ldots,x_n)$ and $(x_1,\ldots,x_n,x_{n+1})$
that differ by exactly one element $x_{n+1}$. 
Then one can easily check that the corresponding empirical measures satisfy the identity
$$
\nu_{n+1} - \nu_n = \frac{1}{n+1} \left( \d_{x_{n+1}} - \nu_n \right).
$$
Then, using linearity of $\LL$ and the triangle inequality, we obtain
\begin{equation}	\label{eq: L stability}
\norm{\LL(\nu_{n+1})-\LL(\nu_n)}_1
= \norm{\LL(\nu_{n+1}-\nu_n)}_1
\le \frac{1}{n+1} \norm{\LL(\d_{x_{n+1}})}_1 + \frac{1}{n+1} \norm{\LL(\nu_n)}_1.
\end{equation}
To bound this quantity further, note that for every $i$ the definition of $\LL$ yields
\begin{equation}	\label{eq: L delta}
\norm{\LL(\d_{x_i})}_1
= \sum_{f \in \FF} \abs{\iip{f}{\d_{x_i}}}
= \sum_{f \in \FF} \abs{f(x_i)}
\le \abs{\FF},
\end{equation}
where in the last step we used that each function $f \in \FF$ takes values in $[-1,1]$.
Therefore, by linearity of $\LL$ and the triangle inequality, 
$$
\norm{\LL(\nu_n)}_1 
= \norm[3]{\frac{1}{n} \sum_{i=1}^n \LL(\d_{x_i})}_1
\le \frac{1}{n} \sum_{i=1}^n \norm{\LL(\d_{x_i})}_1
\le \abs{\FF},
$$
where in the last step we used \eqref{eq: L delta}.
Substituting the bound \eqref{eq: L delta} for $i=n+1$ and the last inequality into 
\eqref{eq: L stability}, we conclude that 
$$
 \Delta \coloneqq \norm{\LL(\nu_{n+1})-\LL(\nu_n)}_1
\le \frac{2\abs{\FF}}{n}.
$$

Applying Lemma~\ref{lem: Laplacian mechanism}, we see that the addition of the independent 
Laplacian random variable $\l(f) \sim \Lap(\s)$ to each coordinate $\iip{f}{\nu_n}$ of $\LL(\nu_n)$ preserves $(\Delta/\s)$-differential privacy. Due to the bound on $\Delta$ above, the choice of $\s$ in the algorithm, and the assumption on $n$ in Theorem~\ref{thm: privacy}, we have
$$
\frac{\Delta}{\s} 
\le \frac{2\abs{\FF} \log(\abs{\FF}/\g)}{n\d}
\le \e.
$$
Hence, the family of perturbed coefficients $\iip{f}{\nu_n}+\l(f)$ is $\e$-differentially private. Finally, the function  $h^*$ in~\eqref{eq: algorithm equivalent} computed by the algorithm is a function of these private perturbed coefficients. Hence the algorithm is $\e$-differentially private. Theorem~\ref{thm: privacy} is proved.

\subsection{Accuracy}

Here, our input data $X_1,\ldots,X_n$ are i.i.d.\  points sampled from $\Omega$ 
according to the probability measure $\nu$, and the reduced space $\Omega^*$ is formed by the points $Z_1,\ldots,Z_m$ sampled from $\Omega$ 
according to the probability measure $\mu$. 
Consider the corresponding empirical probability measures
$$
\nu_n = \frac{1}{n} \sum_{i=1}^n \d_{X_i}
\quad \text{and} \quad
\mu_m = \frac{1}{m} \sum_{i=1}^m \d_{Z_i}.
$$
Let us reweigh the reduced space, introducing the measure
\begin{equation}	\label{eq: reweighed}
\nu'_m = \frac{1}{m} \sum_{i=1}^m \Big( \frac{d\nu}{d\mu}\Big)(Z_i) \, \d_{Z_i}.
\end{equation}
The point is that both $\nu_n$ and $\nu'_m$ are unbiased estimators of the population measure $\nu$:
$$
\E \nu_n = \E \nu'_m = \nu.
$$
These identities can be easily deduced from the definition of the Radon-Nikodym derivative. 
In our argument, however, they will not be used. Instead, we need uniform deviation inequalities that would guarantee that with high probability, all linear statistics of $\nu_n$, $\nu'_m$ and $\nu$ approximately match. This is the content of the next two lemmas.

\begin{lemma}[Deviation of linear statistics for $\nu_n$]		\label{lem: deviation nun}
  Let $(\Omega, \Sigma, \nu)$ be a probability space, and let
  $\nu_n$ be an empirical probability measure corresponding to $\nu$.
  If $n \ge \d^{-2} \log(\abs{\FF}/\g)$ then, 
  with probability at least $1-\g$, we have
  $$
  \max_{f \in \FF} \abs{\iip{f}{\nu_n} - \iip{f}{\nu}} \le \d.
  $$
\end{lemma}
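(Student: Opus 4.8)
The plan is a standard uniform law of large numbers: pointwise concentration for each test function, followed by a union bound over the finite class $\FF$. Fix $f \in \FF$. By \eqref{eq: bilinear empirical}, $\iip{f}{\nu_n} = \frac{1}{n}\sum_{i=1}^n f(X_i)$ is the average of the i.i.d.\ random variables $f(X_1),\dots,f(X_n)$, each taking values in $[-1,1]$ with common mean $\E f(X_1) = \iip{f}{\nu}$. Thus $\iip{f}{\nu_n}$ is an unbiased estimator of $\iip{f}{\nu}$, and the only thing to control is its fluctuation around the mean.

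First I would apply Hoeffding's inequality to this bounded i.i.d.\ average: for every $t > 0$,
\[
\Pr{\abs{\iip{f}{\nu_n} - \iip{f}{\nu}} > t} \le 2\exp\Big(-\frac{n t^2}{2}\Big),
\]
the constant in the exponent reflecting that each summand ranges over an interval of length $2$ (because the test functions are $[-1,1]$-valued). Next I would take a union bound over the $\abs{\FF}$ test functions:
\[
\Pr{\max_{f \in \FF}\abs{\iip{f}{\nu_n} - \iip{f}{\nu}} > \d} \le 2\abs{\FF}\exp\Big(-\frac{n\d^2}{2}\Big).
\]
Finally, substituting the assumed lower bound $n \ge \d^{-2}\log(\abs{\FF}/\g)$ on the sample size forces the right-hand side down to at most $\g$, which is exactly the claimed bound.

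There is no real obstacle here: the lemma is a textbook concentration statement, and essentially all the content lies in choosing the appropriate tail inequality (Hoeffding for bounded summands) and bookkeeping the absolute constant coming from the range $[-1,1]$ of the test functions and from the two-sided bound. If one wished to be sharper one could instead use Bernstein's inequality with the variance bound $\Var f(X_1) \le 1$, but this refinement is not needed. I note that this same template — pointwise concentration plus a union bound over $\FF$ — will reappear for the companion deviation estimate for the reweighted measure $\nu'_m$ of \eqref{eq: reweighed}; there the summands $(d\nu/d\mu)(Z_i)\,f(Z_i)$ are no longer bounded, so one will have to replace Hoeffding by a Bernstein/Bennett-type inequality and supply a second-moment bound $\int (d\nu/d\mu)^2 f^2\, d\mu \le \kappa(\nu\|\mu)$ controlled by the R\`enyi condition number.
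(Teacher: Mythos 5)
Your proposal follows essentially the same route as the paper: pointwise exponential concentration for each fixed $f\in\FF$ (the paper invokes Bernstein's inequality, citing \cite[Theorem~2.8.4]{vershyninbook}, where you use Hoeffding; for $[-1,1]$-valued summands this is an immaterial difference), followed by a union bound over the finite class. The structure and the key observation that $\iip{f}{\nu_n}$ is a bounded i.i.d.\ average with mean $\iip{f}{\nu}$ are exactly the paper's.

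The one step that does not go through as written is the final substitution. With your explicit Hoeffding constant, the hypothesis $n \ge \d^{-2}\log(\abs{\FF}/\g)$ gives only
$$
2\abs{\FF}\exp\Big(-\frac{n\d^2}{2}\Big) \;\le\; 2\abs{\FF}\Big(\frac{\g}{\abs{\FF}}\Big)^{1/2} \;=\; 2\sqrt{\abs{\FF}\,\g},
$$
which is not $\le\g$ (indeed it exceeds $1$ unless $\g<1/(4\abs{\FF})$), so the claim that the assumed lower bound on $n$ ``forces the right-hand side down to at most $\g$'' is false with the constants you carry. To close the argument one needs a sample-size condition of the form $n \ge 2\d^{-2}\log(2\abs{\FF}/\g)$, or one must treat the lemma's hypothesis as holding up to absolute constants. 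In fairness, the paper's own proof has the same slack: it records the Bernstein tail as $\exp(-\d^2 n)$, whereas Bernstein (or Hoeffding) with variance at most $1$ and range at most $2$ yields $2\exp(-c\,\d^2 n)$ with $c<1$, so the stated hypothesis on $n$ does not literally produce $\g/\abs{\FF}$ there either. So this is a shared bookkeeping issue about absolute constants rather than a missing idea; conceptually your proof and the paper's coincide, but your last line, as you wrote it, is the place where the argument would be flagged in a careful check.
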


\begin{proof}
For each function $f \in \FF$, recalling \eqref{eq: bilinear form} and \eqref{eq: bilinear empirical} we get 
$$
\iip{f}{\nu} 
= \int f \, d\nu
= \E f(X), \quad 
\iip{f}{\nu_n} 
= \frac{1}{n} \sum_{i=1}^n f(X_i),
$$
where $X, X_1, X_2,\ldots$ are drawn from $\Omega$ independently according to probability measure $\nu$.
Therefore 
$$
\iip{f}{\nu_n} - \iip{f}{\nu}
= \frac{1}{n} \sum_{i=1}^n \left( f(X_i) - \E f(X_i) \right)
$$
is a normalized and centered sum of i.i.d.\ random variables, which are 
bounded by $1$ in absolute value (by assumption on $\FF$).
Applying Bernstein's inequality (see e.g. \cite[Theorem~2.8.4]{vershyninbook}) we get for any $\d \in (0,1)$ that
$$
\Pr{\abs{\iip{f}{\nu_n} - \iip{f}{\nu}}>\d}
\le \exp(-\d^2 n)
\le \g/\abs{\FF},
$$
where in the last step we used the assumption on $n$.
The lemma is proved.
\end{proof}

\begin{lemma}[Deviation of linear statistics for $\nu'_m$]	\label{lem: deviation reweighed}
  If $m \ge \d^{-2} K \abs{\FF}/\g$ and $\kappa(\nu\|\mu) \le K$ then, 
  with probability at least $1-\g$, we have
  $$
  \max_{f \in \FF} \abs{\iip{f}{\nu'_m} - \iip{f}{\nu}} \le \d.
  $$
\end{lemma}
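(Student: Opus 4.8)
The plan is to mimic the proof of Lemma~\ref{lem: deviation nun}, but with one crucial change: here the relevant i.i.d.\ summands are \emph{not} bounded, so Bernstein's inequality is unavailable, and I will instead run a second-moment (Chebyshev) argument in which the variance is controlled directly by the R\`enyi condition number $\kappa(\nu\|\mu)$.

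First I would fix $f \in \FF$ and write, from the definition \eqref{eq: reweighed},
$$
\iip{f}{\nu'_m} = \frac{1}{m} \sum_{i=1}^m W_i, \qquad W_i := \Big(\frac{d\nu}{d\mu}\Big)(Z_i)\, f(Z_i),
$$
where $Z_1,\ldots,Z_m$ are i.i.d.\ $\sim \mu$. The change-of-measure identity gives $\E W_i = \int \frac{d\nu}{d\mu}\, f\, d\mu = \int f\, d\nu = \iip{f}{\nu}$, so $\iip{f}{\nu'_m}-\iip{f}{\nu}$ is a normalized centered sum of i.i.d.\ random variables. For the variance, since $|f|\le 1$,
$$
\Var(W_i) \le \E W_i^2 = \int \Big(\frac{d\nu}{d\mu}\Big)^2 f^2 \, d\mu \le \int \Big(\frac{d\nu}{d\mu}\Big)^2 d\mu = \kappa(\nu\|\mu) \le K,
$$
using \eqref{kappa} and the hypothesis. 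Hence $\Var\!\big(\iip{f}{\nu'_m}\big) = \Var(W_1)/m \le K/m$.

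Next I would apply Chebyshev's inequality and then the assumption $m \ge \d^{-2} K \abs{\FF}/\g$:
$$
\Pr{\abs{\iip{f}{\nu'_m}-\iip{f}{\nu}} > \d} \le \frac{\Var(\iip{f}{\nu'_m})}{\d^2} \le \frac{K}{m\d^2} \le \frac{\g}{\abs{\FF}}.
$$
Finally, a union bound over the (finitely many) $f \in \FF$ yields $\Pr{\max_{f\in\FF}\abs{\iip{f}{\nu'_m}-\iip{f}{\nu}} > \d} \le \g$, which is the claim.

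\textbf{Main obstacle.} There is no genuinely hard step; the only subtlety is recognizing that $d\nu/d\mu$ is unbounded in general, so one must replace the exponential-concentration route of Lemma~\ref{lem: deviation nun} by a variance bound. This also explains the qualitatively weaker requirement on $m$ (a factor $\abs{\FF}/\g$ rather than $\log(\abs{\FF}/\g)$): Chebyshev only gives polynomial tail decay. One should double-check integrability/finiteness of the second moment — but this is exactly what $\kappa(\nu\|\mu)\le K<\infty$ guarantees, and it also presupposes $\nu \ll \mu$ so that $d\nu/d\mu$ is well defined.
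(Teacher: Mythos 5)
Your proposal is correct and follows essentially the same route as the paper: write $\iip{f}{\nu'_m}-\iip{f}{\nu}$ as a centered i.i.d.\ sum via the change of measure, bound the variance of each term by $\kappa(\nu\|\mu)\le K$ using $|f|\le 1$, apply Chebyshev's inequality, and finish with a union bound over $\FF$. Your added remarks about why Bernstein is unavailable and why $\nu\ll\mu$ is needed are accurate but do not change the argument.
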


\begin{proof}
For each test function $f \in \FF$, by definition of the Radon-Nikodym derivative, we have
$$
\iip{f}{\nu}
= \int f \, d\nu
= \int f(z) \Big( \frac{d\nu}{d\mu} \Big)(z) \, d\mu(z)
= \E \Big( \frac{d\nu}{d\mu} \Big)(Z) f(Z),
$$
where $Z$ is drawn from $\Omega$ according to probability measure $\mu$.
Furthermore, by definition of reweighting \eqref{eq: reweighed} we have
$$
\iip{f}{\nu'_m}
= \int f \, d\nu'_m
= \frac{1}{m} \sum_{i=1}^m \Big( \frac{d\nu}{d\mu}\Big)(Z_i) \, f(Z_i),
$$ 
where $Z_i$ are i.i.d.\ copies of $Z$.
Therefore 
$$
\iip{f}{\nu'_m} - \iip{f}{\nu}
= \frac{1}{m} \sum_{i=1}^m \left( R_i - \E R_i \right)
\quad \text{where} \quad 
R_i = \Big( \frac{d\nu}{d\mu}\Big)(Z_i) \, f(Z_i).
$$
In other words, we have a normalized and centered sum of i.i.d.\ random variables. The variance of each term of the sum is bounded by the R\`enyi condition number $\kappa(\nu\|\mu)$. Indeed,  
$$
\Var(R_i)
\le \E R_1^2
= \E \Big( \frac{d\nu}{d\mu}\Big)(Z)^2 \, f(Z)^2
\le \E \Big( \frac{d\nu}{d\mu}\Big)(Z)^2
= \int \Big( \frac{d\nu}{d\mu}\Big)^2 \, d\mu
= \kappa(\nu\|\mu) \le K.
$$
Here, in the third step we used the assumption that $f$ takes values in $[-1,1]$.

We showed that the variance of $\iip{f}{\nu'_m} - \iip{f}{\nu}$ is bounded by $K/m$.
Applying Chebyshev's inequality, we get for any $\d \in (0,1)$ that
$$
\Pr{\abs{\iip{f}{\nu'_m} - \iip{f}{\nu}}>\d}
\le \frac{K}{\d^2 m}
\le \frac{\g}{\abs{\FF}},
$$
where in the last step we used the assumption on $m$.
The lemma is proved.
\end{proof}

\begin{proof}[Proof of Theorem~\ref{thm: accuracy}]
Assume that the events in the conclusions of Lemma~\ref{lem: deviation nun} and Lemma~\ref{lem: deviation reweighed} hold; this happens with probability at least $1-2\gamma$.

The measure $\nu'_m$ introduced in \eqref{eq: reweighed} need not be a probability measure, 
since its total mass 
$$
r \coloneqq \iip{\one}{\nu'_m} 
$$
does  not need to equal $1$. 
But it is not far from $1$. 
Indeed, since the constant function $\one$ lies in $\FF$ by assumption, 
the conclusion of Lemma~\ref{lem: deviation reweighed} gives
$$
\abs{\iip{\one}{\nu'_m} - \iip{\one}{\nu}} \le \d.
$$
Since $\nu$ is a probability measure, it satisfies $\iip{\one}{\nu}=1$, and we get
\begin{equation}	\label{eq: r}
\abs{r-1} \le \d.
\end{equation}

Now, $\nu'_m/r$ is a probability measure. Let us check that it satisfies a deviation inequality.
To this end, first note that the conclusion of Lemma~\ref{lem: deviation reweighed} and triangle inequality give
\begin{equation}	\label{eq: abs class}
\abs{\iip{f}{\nu'_m}} 
\le \abs{\iip{f}{\nu}} + \d
= \abs{\int f \, d\nu} + \d
\le 1+\d
\end{equation}
where we used the assumption that all $f \in \FF$ take values in $[-1,1]$.
Thus, subtracting and adding the term $\iip{f}{\nu'_m}$, we obtain
$$
\abs{ \iip{f}{\nu'_m/r} - \iip{f}{\nu} } 
\le \abs{1/r-1} \, \abs{\iip{f}{\nu'_m}} + \abs{ \iip{f}{\nu'_m} - \iip{f}{\nu} }.
$$
Since $\d \in (0,1/2]$,
\eqref{eq: r} yields $\abs{1/r-1} \le 2\d$. Furthermore, \eqref{eq: abs class} yields $\abs{\iip{f}{\nu'_m}} \le 3/2$. Finally, the conclusion of Lemma~\ref{lem: deviation reweighed} yields $\abs{ \iip{f}{\nu'_m} - \iip{f}{\nu} } \le \d$. Substituting these bounds into the inequality above, we obtain the desired deviation inequality:
$$
\max_{f \in \FF} \abs{ \iip{f}{\nu'_m/r} - \iip{f}{\nu} } \le 4\d.
$$

Combining this with the conclusion of Lemma~\ref{lem: deviation nun} via the triangle inequality, we obtain 
$$
\max_{f \in \FF} \abs{ \iip{f}{\nu'_m/r} - \iip{f}{\nu_n} } \le 5\d.
$$
A simple union bound over $\abs{\FF}$ Laplacian random variables shows that
with probability at least $1-\g$,
\begin{equation}	\label{eq: Laplacian bound}
\max_{f \in \FF} \abs{\l(f)} \le \s \log(\abs{\FF}/\g) = \d
\end{equation}
where the last identity is due to the choice of $\s$ in the algorithm.
Combining the two bounds, with probability at least $1-3\g$, we have
$$
\max_{f \in \FF} \abs{ \iip{f}{\nu'_m/r} - \iip{f}{\nu_n} - \l(f) } \le 6\d.
$$

Recall that, by construction, $\nu'_m/r$ is a probability measure on the set $\Omega^* = \{Z_1,\ldots,Z_m\}$.
Therefore, minimality of $h^*$ in algorithm \eqref{eq: algorithm equivalent} implies that 
$$
\max_{f \in \FF} \abs{ \iip{f}{h^*} - \iip{f}{\nu_n} - \l(f) } \le 6\d.
$$
Using \eqref{eq: Laplacian bound} again, we conclude that 
$$
\max_{f \in \FF} \abs{ \iip{f}{h^*} - \iip{f}{\nu_n} } \le 7\d.
$$

To complete the proof, we note that bootstrapping preserves the accuracy of linear statistics. Indeed, apply Lemma~\ref{lem: deviation nun} for the probability density $h^*$ on $\Omega^*$ and its empirical counterpart $h^*_k = \frac{1}{k} \sum_{i=1}^k \d_{Y_i}$ where $Y_i$ are sampled independently from $\Omega^*$ according to the probability density $h^*$.
Since $k \ge \d^{-2} \log(\abs{\FF}/\g)$ by assumption, 
with probability at least $1-\g$  we have 
$$
\max_{f \in \FF} \abs{\iip{f}{h^*_k} - \iip{f}{h^*}} \le \d.
$$
Combining this with the previous bound, we obtain that with probability at least $1-4\g$, 
$$
\max_{f \in \FF} \abs{\iip{f}{h^*_k} - \iip{f}{\nu_n}} \le 8\d.
$$
This is an equivalent form of $(8\d)$-accuracy \eqref{eq: delta-accuracy}.
Theorem~\ref{thm: accuracy} is proved.
\end{proof}

\section{Open problems}

While the method proposed in this paper provides a simple and efficient roadmap to construct private synthetic data that preserve with high accuracy linear statistics of the original data, we may require our synthetic data to accurately model other features of the data that are not (fully) captured by linear statistics. This poses numerous questions. For example, how well do linear statistics inform other kinds of  data analysis tasks (e.g., clustering, classification, regression, etc.)?  

Another challenge is that we do not know the population distribution $\nu$, and thus we may not know how to choose a good sampling distribution $\mu$. Using various generative models seem a natural choice for certain types of data, such as text and images. Using those, we may hope to build the sampling distribution $\mu$ that has enough ``overlap'' with the population distribution $\nu$ (as measured by the Renyi condition number). Since we just need to be able to sample from $\nu$, building an MCMC model for it is enough. 

It is important, however, that we may not use the true data $X$ to make any decisions about $\mu$, as this could
violate privacy. The sampling distribution $\mu$ should be estimated in some other way. We can either use private
density estimation for that purpose, or estimate $\mu$ from some publicly available data that does not need to be protected by privacy.

\section*{Acknowledgement}

M.B. acknowledges support from NSF DMS-2140592. T.S. acknowledges support from NSF-DMS-1737943, NSF DMS-2027248, NSF CCF-1934568 and a CeDAR Seed grant.
 R.V. acknowledges support from NSF DMS-1954233, NSF DMS-2027299, U.S. Army 76649-CS, and NSF+Simons Research Collaborations on the Mathematical and Scientific Foundations of Deep Learning.

\end{document}